\newtheorem{theorem}{Theorem}
\newtheorem{definition}[theorem]{Definition}
\newtheorem{lemma}[theorem]{Lemma}
\newtheorem{corollary}[theorem]{Corollary}
\let\leq\leqslant
\let\geq\geqslant
\let\le\leqslant
\let\ge\geqslant
\let\theta\vartheta
\def\section{\@startsection {section}{1}{\z@}%
  {-3.5ex plus -1ex
    minus -.2ex}{2.3ex plus .2ex}{\large\bf}}
\def\subsection{\@startsection{subsection}{2}%
  {\z@}{-3.25ex plus
    -1ex minus -.2ex}{1.5ex plus .2ex}{\normalsize\bf}}
\def\@fnsymbol#1{\ensuremath{\ifcase#1\or 1\or 2\or 3\or 4\or
    5\or 6\or 7 \or 8\ or 9 \or 10\or 11 \else\@ctrerr\fi}}
\newcommand{\packAA}{\textsc{DiskPackingAABox}\xspace}
\newcommand{\packB}{\textsc{DiskPackingBox}\xspace}
\newcommand{\packC}{\textsc{DiskPackingConvex}\xspace}
\newcommand{\stab}{\textsc{DiskStabbing}\xspace}
\newcommand{\ds}{d_s} 
\newcommand{\OPT}{\textsc{opt}}
\newcommand{\hh}{\bar{h}}
\newcommand{\disks}{\mathcal{D}}
\newcommand{\eps}{\varepsilon}
\newcommand{\Reals}{\mathbb{R}}
\DeclareMathOperator{\vol}{vol}
\title{Packing $d$-dimensional balls into a $d+1$-dimensional container}
\author{Helmut Alt%
  \thanks{Institut f\"ur Informatik, Freie Universit\"at Berlin;
    {alt@mi.fu-berlin.de}, {nadja.seiferth@fu-berlin.de}}
  \and
  Sergio Cabello%
  \thanks{University of Ljubljana and Institute of Mathematics, 
		Physics and Mechanics; sergio.cabello@fmf.uni-lj.si}
  \and
  Otfried Cheong%
  \thanks{SCALGO; otfried@scalgo.com}
  \and
  Ji-won Park%
  \thanks{Universit\'e de Lorraine, CNRS, Inria, LORIA; Ji-won.Park@inria.fr}
  \and
  Nadja Seiferth\footnotemark[1]}
\begin{document}

\maketitle

\begin{abstract}
  In this article, we consider the problems of finding in~$d+1$
  dimensions a minimum-volume axis-parallel box, a minimum-volume 
  arbitrarily-oriented box and a minimum-volume convex body into which 
  a given set of $d$-dimensional unit-radius balls can be packed under
  translations.  
  The computational problem is neither known to be NP-hard nor to be
  in NP.  We give a constant-factor approximation algorithm for
  each of these containers based on a reduction to finding a 
  shortest Hamiltonian path in a weighted graph, which in turn models
  the problem of stabbing the centers of the input balls  
  while keeping them disjoint.
  We also show that for~$n$ such balls, a container of
  volume~$O(n^{\frac{d-1}{d}})$ is always sufficient and sometimes
  necessary.  As a byproduct, this implies that for~$d \geq 2$ there
  is no finite size $(d+1)$-dimensional convex body into which all 
  $d$-dimensional unit-radius balls can be packed simultaneously.
\end{abstract}

\section{Introduction}

Packing a set of geometric objects in a non-overlapping way into a
minimum-size container is an intriguing problem.  Because of its
practical significance, many variants of the problem have been
investigated, see the surveys~\cite{Bulletin,Scheithauer} and the
references therein. Even a simple variant of packing a set of
rectangles into a rectangular container turns out to be
NP-complete~\cite{FowlerPT81}.  In many cases, not much is known about
the true complexity of the problem.

Constant-factor approximation algorithms of polynomial running time
have been found for many variants of packing problems, in particular
for finding minimum-size rectangular or convex containers for a set of
convex polygons under translations~\cite{ConvexPolygons}, that is, the
objects may be translated but rotations are not allowed. Also,
approximation algorithms for rigid motions (translations and
rotations) are known in this case, see for instance~\cite{leo14}.

In three dimensions, most previous results are concerned with
``regular'' packing problems where objects to be packed are
axis-parallel boxes, see for
instance~\cite{DBLP:journals/jcst/DiedrichHJTT08,DBLP:conf/sofsem/JansenP14}.
In particular, approximation algorithms for packing rectangular
cuboids or convex polyhedra into minimum-volume rectangular cuboids or
convex containers under rigid motions are known~\cite{ISAAC}.  Whether
this is possible under only translations remains open.

In this paper, we provide several results for packing disks of unit
radius under translations into minimum-volume containers of different
types.  In fact, we also handle a higher-dimensional analogue, as
follows.  Let~$d \geq 1$ be fixed. We define a \emph{unit hyperdisk}
to be a $d$-dimensional unit-radius ball in~$\Reals^{d+1}$.  The task
is to pack unit hyperdisks under translations in containers of
dimension~$d+1$ that may be axis-parallel boxes (see
Figure~\ref{fig:exampleDiskPacking} for an example with $d=2$), boxes
that are arbitrarily oriented, or arbitrary convex bodies.  In all
cases, the objective is to minimize the $(d+1)$-volume of the
container.  We describe constant-factor approximation algorithms for
these problems.  Our approximation factors are high and grow quickly
with the dimension, but it is of theoretical interest that the
problem, whose decision version is neither known to be NP-hard nor to
be in NP, can be approximated within a constant factor.  Our
techniques also show that for~$n$ unit hyperdisks, a container of
volume~$O(n^{\frac{d-1}{d}})$ is always sufficient and sometimes
necessary.  As a byproduct, this implies that for~$d \geq 2$ there is
no finite size $(d+1)$-dimensional convex container into which all
unit hyperdisks can be packed simultaneously.

\medskip 

The input to our problem is a set of~$n$ unit hyperdisks
in~$\Reals^{d+1}$.  Nearly all the hyperdisks we consider are unit,
and we often drop the adjective. Since we are allowed to freely
translate the hyperdisks, we can assume each hyperdisk to be centered
at the origin, so that it is fully defined by its normal vector.  Two
hyperdisks are \emph{parallel} if their normal vectors are multiples
of each other (in particular, a vector and its negative define
parallel hyperdisks).  A set of \emph{distinct hyperdisks} is a set
where no two hyperdisks are parallel.

Two hyperdisks \emph{overlap} if there is a point that lies in the
relative interior of both.  If they intersect but do not overlap, then
they \emph{touch}.  By \emph{non-overlapping}, we mean a placement of
hyperdisks such that no two hyperdisks overlap, but where hyperdisks
are allowed to touch.  In other words, we treat hyperdisks as if they
consist of their relative interior only.  With this, we do not need to
consider placements of the hyperdisks where they are arbitrarily close
but remain disjoint.

Our main problem is as follows (see Figure~\ref{fig:exampleDiskPacking}):
\begin{definition}
  Given a set of distinct unit hyperdisks in~$\Reals^{d+1}$, 
  find a container of minimum volume in~${\Reals^{d+1}}$
  such that all hyperdisks can be translated into the container without overlap.
  The problem is called
  \begin{itemize}
  \item \packAA when the container is an axis-parallel box;
  \item \packB when the container is a box of arbitrary orientation;
  \item \packC when the container is a convex body.
  \end{itemize}
  In all cases we also want an actual packing of the hyperdisks inside the container.
\end{definition}

\begin{figure}%
  \centerline{\hfill\includegraphics[scale=0.16]{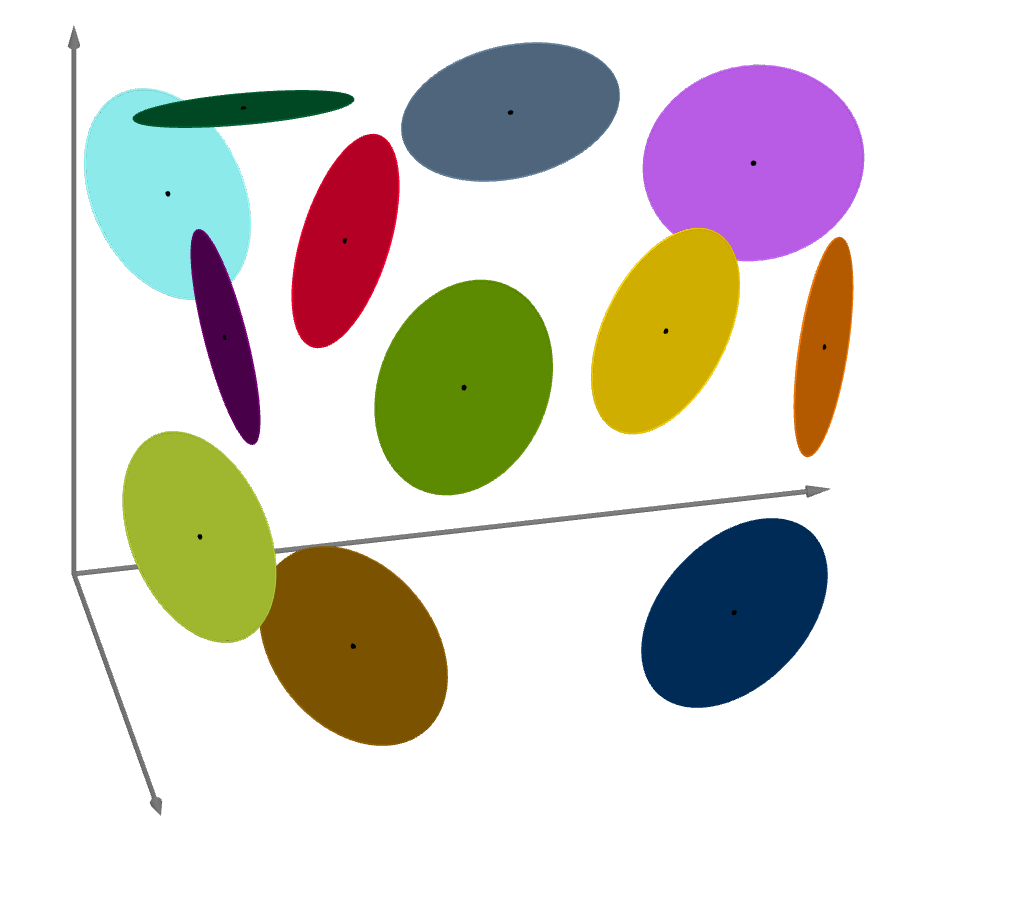}\hfill
    \includegraphics[trim={5cm 2cm 10cm
        0},clip,scale=0.16]{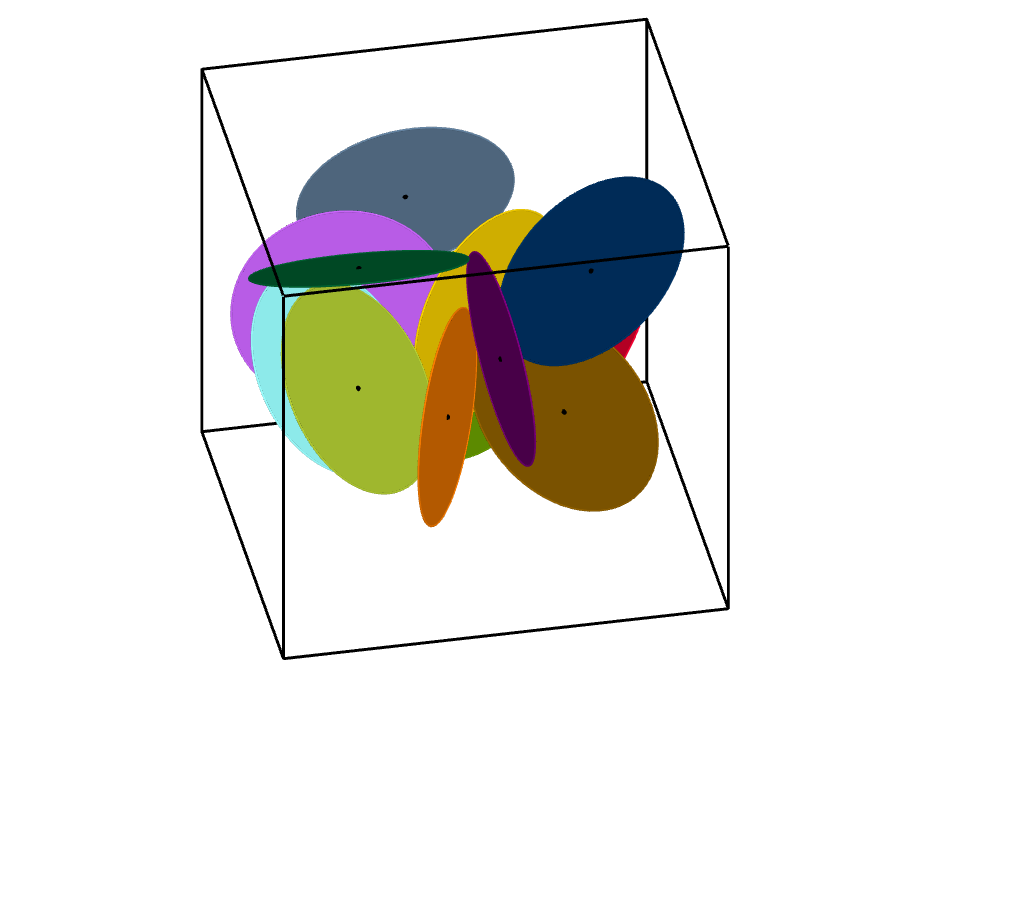}\hfill}
  \caption{Input and solution for the \packAA-problem.}
  \label{fig:exampleDiskPacking}
\end{figure}

For example, the special case of $d=2$ is about packing 2-dimensional
unit disks in $\Reals^3$, while for $d=1$ we talk about packing unit 
segments in the plane.
It is clear that the optimal values of \packAA, \packB and \packC
are monotonically decreasing.

We will approximately solve these packing problems by arranging subsets 
of hyperdisks
such that their centers lie on a line and they are packed as densely
as possible.  Let $O$ be a given ordering of a set of hyperdisks, and
let~$s\in\Reals^{d+1}$ be a vector.  We define the \emph{length}
of~$O$ with respect to direction~$s$ as follows: let~$\ell$ be a line
with direction vector~$s$. Place the hyperdisks such that their
centers lie on~$\ell$ and appear in the ordering~$O$ when traversing
the line in direction~$s$, and such that two consecutive hyperdisks
touch but do not overlap.  Then the \emph{length of the ordering}~$O$
with respect to~$s$ is the distance from the center of the first
hyperdisk to the center of the last hyperdisk.
Figure~\ref{fig:example-stab-2D} illustrates the definition for~$d=1$,
where the hyperdisks are line segments.
\begin{figure}[ht]
  \centerline{\includegraphics{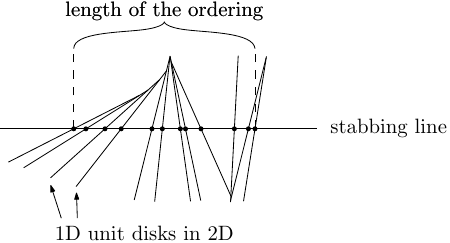}}
  \caption{The length of an ordering of $1$-dimensional hyperdisks.}
  \label{fig:example-stab-2D}
\end{figure}

We can now define the \stab-problem as follows (see 
Figure~\ref{fig:exampleDiskStabbing}):
\begin{definition}[\stab]
  Given a set of distinct unit hyperdisks in~$\Reals^{d+1}$, 
  and an additional vector $s\in\Reals^{d+1}$ defining 
  the direction of a line, find an ordering of the hyperdisks that 
  minimizes the length with respect to~$s$.
\end{definition}

\begin{figure}[ht]
  \centerline{\hfill\includegraphics[scale=0.16]{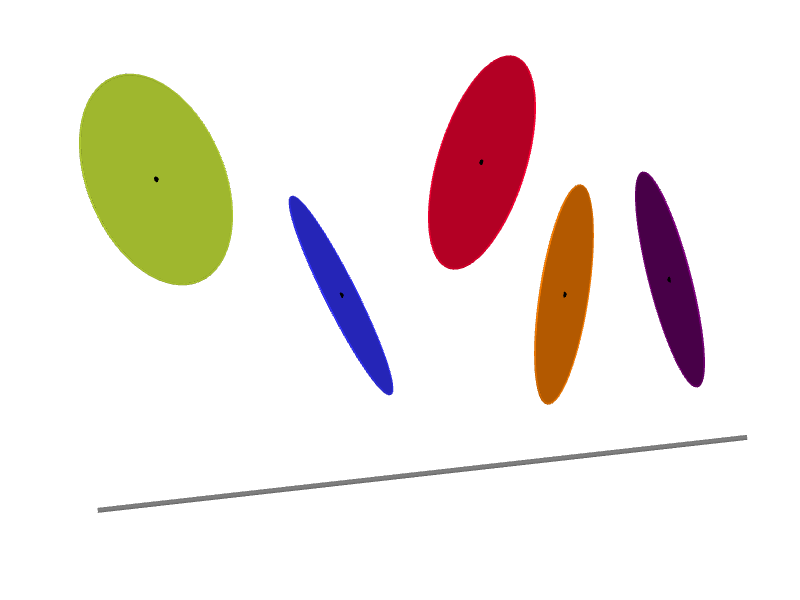}\hfill
    \includegraphics[scale=0.16]{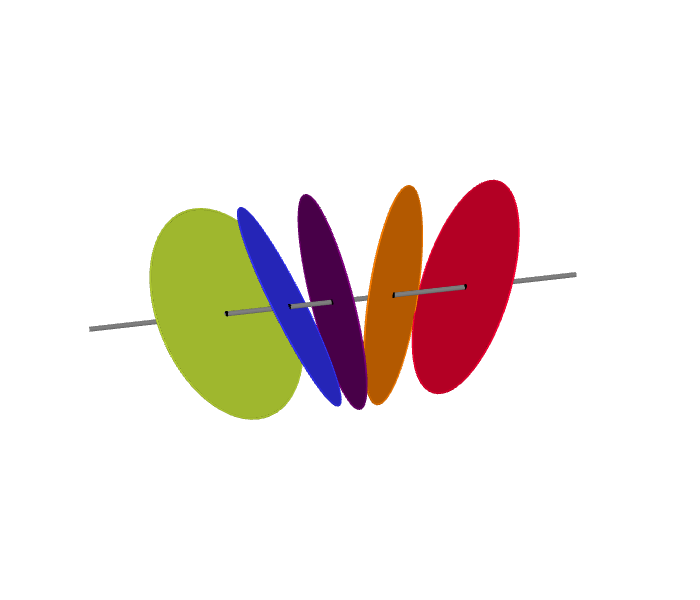}\hfill}
  \caption{Input and solution for the \stab-problem with $d=2$.}
  \label{fig:exampleDiskStabbing}
\end{figure}

In~Section~\ref{sec:metric}, we study the geometry of the
\stab-problem and define a metric on the family of unit hyperdisks.  
An important consequence of this metric is that each solution to the
\stab-problem gives a non-overlapping packing of the unit hyperdisks.  
Note that this is not obvious because the definition of~\stab 
only requires that consecutive hyperdisks in the ordering are 
non-overlapping, but in principle non-consecutive hyperdisks in
the ordering could still overlap. Here it is key that we
restrict our attention to \emph{unit} hyperdisks.
This property will allow us to prove that an optimal solution 
to~\stab implies an approximate solution to~\packAA.  We prove 
this first for hyperdisks with similar normal vectors in 
Section~\ref{sec:similar}, then for the 
general case in Section~\ref{sec:general}.

In Section~\ref{sec:solving} we then show how to approximately
solve \stab by (approximately) computing a shortest Hamiltonian path
in a complete weighted graph, implying a 
constant-factor approximation for~\packAA.

In Section~\ref{sec:convex} we consider the 
\packB and \packC problems, where we reuse several of the ideas
employed for the case of axis-parallel containers.

For $d=1$, all (infinitely many) unit-length line segments can be
packed into a rectangle of area~$2$
(Figure~\ref{fig:segment-packing}), and even smaller containers are
possible. In Section~\ref{sec:unbounded} we show that a similar result
does not hold for $d\geq 2$: there is no bounded-volume convex 
container into which all $d$-dimensional hyperdisks can be packed.
\begin{figure}[htb]
  \centerline{\includegraphics{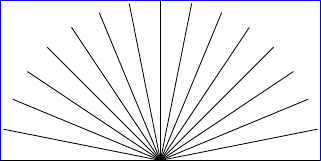}}
  \caption{All unit-length segments can be packed into a container of area~$2$.}
  \label{fig:segment-packing}
\end{figure}
More precisely, we consider a family~$\disks$ of~$n$ hyperdisks
in~$\Reals^{d+1}$, a stabbing direction~$s$, and an angle~$\phi <
\pi/2$, such that the normal of each disk in~$\disks$ makes angle at
most~$\phi$ with~$s$.  We show that there is then a stabbing
of~$\disks$ in direction~$s$ of length~$O(n^{\frac{d-1}{d}})$ (and
therefore also a packing of this volume).  We also construct such a
family~$\disks$ where this bound is tight.

The reader may wish to contrast this with the \emph{Potato Sack
Theorem} of Auerbach et al.~\cite{potato-sack}, see
Kosi\'nski~\cite{Kosinski1957} for a proof and the survey by 
Fejes T{\'o}th~\cite{FejesToth2023} for a general discussion.  
It states that when rigid
motions are allowed, any sequence of convex bodies in~$\Reals^{d}$ can
be packed into a box of finite volume, as long as there is a common
bound on the diameter of the bodies and their total volume is bounded.

In~Section~\ref{sec:OtherObjects} we briefly consider packing other
fixed shapes like squares and mention some open problems.

\section{A metric on unit hyperdisks}
\label{sec:metric}

Given a vector $s \in \mathbb{S}^{d}$, we define the
\emph{$s$-distance}~$\ds(D_1, D_2)$ between two hyperdisks~$D_1$
and~$D_2$ with centers~$c_1$ and~$c_2$ as the length of the
ordering~$D_1, \, D_2$ with respect to~$s$.  In other words, it is the
distance~$|c_1 c_2|$ when the hyperdisks touch and the
ray~$\overrightarrow{c_1 c_2}$ has direction~$s$.  When $D_1$
and~$D_2$ are parallel, we define~$\ds(D_1, D_2) = 0$.

\begin{theorem}
  For any $s \in \mathbb{S}^{d}$, the distance~$\ds$ is a metric on
  the set of $d$-dimensional unit hyperdisks with normals not orthogonal
  to~$s$ (and where parallel hyperdisks are considered equivalent).
  \label{thm:metric}
\end{theorem}
\begin{proof}
  If $D_1$ and $D_2$ are not parallel, it is clear that $\ds(D_1,
  D_2) > 0$. Otherwise, $\ds(D_1, D_2) = 0$ by definition.

  Symmetry holds since a point reflection keeps both hyperdisks
  identical (their normal vectors are negated), but reverses the order
  in which a line with direction~$s$ meets them.

  It remains to prove the triangle inequality~$\ds(D_1, D_3) \leq
  \ds(D_1, D_2) + \ds(D_2, D_3)$.  It clearly holds when any two of
  the three hyperdisks are parallel, so we will assume for a
  contradiction that there are three distinct hyperdisks~$D_1, D_2,
  D_3$ with centers~$c_1, c_2, c_3$ such that $\ds(D_1, D_3) >
  \ds(D_1, D_2) + \ds(D_2, D_3)$.  We place $c_1, c_2, c_3$ in this
  order on a line~$\ell$ with direction~$s$ such that $|c_1 c_2| =
  \ds(D_1, D_2)$ (so the hyperdisks touch but do not overlap) and
  that~$|c_1 c_3| = \ds(D_1, D_3)$ (so these hyperdisks touch in a
  point~$p$ and do not overlap).  Since~$|c_2 c_3| = \ds(D_1, D_3) -
  \ds(D_1, D_2) > \ds(D_2, D_3)$, the hyperdisks~$D_2$ and~$D_3$ do
  not intersect at all.

  We consider the situation in the (two-dimensional) plane~$h$
  containing the line~$\ell$ and the point~$p$.  Since $c_i \in h$ and
  the normal of~$D_i$ is not orthogonal to~$s$, the intersection~$e_i
  := D_i \cap h$ is a segment of length~two centered at~$c_i$, for~$i
  \in \{1, 2, 3\}$.  We choose a coordinate system for~$h$
  where~$\ell$ is the~$x$-axis and~$p$ lies below $\ell$, see
  Figure~\ref{fig:metric}.  By assumption, the segment~$e_2$ has
  length two, its midpoint is~$c_2$, and it does not intersect~$e_1$
  or~$e_3$.

  We first observe that if $\angle pc_1 c_2$ or~$\angle pc_3 c_2$ is
  obtuse, then the entire triangle~$c_1 p c_3$ lies inside a circle of
  radius~$1$ around~$c_2$, so~$e_2$ would have to intersect~$e_1$
  or~$e_3$, a contradiction.  It follows that~$\angle pc_1 c_2$
  and~$\angle pc_3 c_2$ are acute, and~$e_2$ is therefore contained in
  the wedge bounded by the rays~$pc_1$ and~$pc_2$ (light blue in
  Figure~\ref{fig:metric}).
  \begin{figure}[ht]
    \centerline{\includegraphics[page=1]{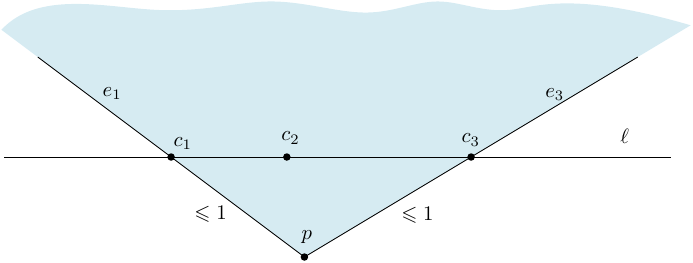}}
    \caption{Proof of Theorem~\ref{thm:metric}, first step.}
    \label{fig:metric}
  \end{figure}

  We can therefore move~$c_1$ to the left and~$c_3$ to the right while
  keeping~$c_2$ and~$p$ unchanged, until~$|c_1 p| = |p c_3| = 1$.
  This can only grow the wedge, so it will still contain~$e_2$.
  The triangle $c_1pc_3$ is isosceles, as shown in  
  Figure~\ref{fig:metric}.
  Let~$u$ be the upper, $v$~the lower endpoint of~$e_2$.  Clearly~$v$
  lies inside the triangle~$c_1 p c_3$.  Breaking symmetry, we assume
  that~$e_2$ has a negative slope, so that~$v$ is the \emph{right}
  endpoint of~$e_2$.
  
  Consider now the segment~$e_2'$ parallel to~$e_2$ centered at~$c_1$,
  with upper endpoint~$u'$ and lower endpoint~$v'$.  The points~$p,
  u', v'$ all lie on the circle~$C$ of radius~$1$ around~$c_1$, see
  Figure~\ref{fig:metric3}.

  Since~$v$ lies above~$p$, so does~$v'$, and therefore~$v'$ lies on
  the counterclockwise arc of~$C$ from~$p$ to~$\ell$.  Let now~$u_1$ be the point
  on~$e_1$ on the horizontal line~$u'u$, and let~$v_i$ be the point
  on~$e_i$ on the horizontal line~$v'v$, for $i \in \{1, 3\}$, see
  Figure~\ref{fig:metric3}.
  \begin{figure}[ht]
    \centerline{\includegraphics[page=3]{metric}}
    \caption{Proof of Theorem~\ref{thm:metric} (final step).}
    \label{fig:metric3}
  \end{figure}
  Since~$v'$ lies to the right of~$p$, we have~$|v' v_3| < |v' v_1| =
  |u' u_1|$.  However, $|u'u| = |v' v|$, so~$v \in \triangle c_1 p
  c_3$ implies $|v'v| < |v'v_3|$, so $|u'u| < |v'v_3| < |u' u_1|$, and
  $u$~lies outside the wedge bounded by the rays~$pc_1$ and~$pc_2$, a
  contradiction.
\end{proof}

As mentioned in the Introduction, Theorem~\ref{thm:metric} implies
that any solution to the \stab-problem for unit hyperdisks gives
a non-overlapping set of hyperdisks. In fact, this has been the
essence of its proof.

Consider now a configuration of two hyperdisks~$D_1$ and~$D_2$ that
realize the $s$-distance~$\ds(D_1, D_2)$ for a given direction
vector~$s$.  We let~$n_i$ denote the normal of hyperdisk~$D_i$,
$c_i$~its center, and~$h_i$ the hyperplane supporting~$D_i$.  Let~$g$
be the $(d-1)$-dimensional affine subspace~$h_1 \cap h_2$ (for~$d=2$
this is a line with direction vector~$n_1 \times n_2$).  The two 
hyperdisks must touch in a point~$p \in g$.  Let $\hh_i$ denote the
half-hyperplane of~$h_i$ delimited by~$g$ that contains~$c_i$.
Note that the intersection of~$g$ and the hyperdisk~$D_i$ is 
either the point $p$ or a $(d-1)$-dimensional ball of positive radius.

We can distinguish three cases that cover all possibilities (but (i) and (ii)
do not exclude each other):
\begin{itemize}
\item[(i)] $D_1$ lies entirely in~$\hh_1$. In this case
	$D_1$ is tangent to~$g$ in~$p$, $|pc_1|= 1$ and~$|pc_2| \leq 1$;
\item[(ii)] $D_2$ lies entirely in~$\hh_2$. In this case
	$D_2$ is tangent to~$g$ in~$p$, $|pc_2|= 1$ and~$|pc_1| \leq 1$;
\item[(iii)] $D_1$ does not lie entirely in~$\hh_1$ and
    $D_2$ does not lie entirely in~$\hh_2$.
    In this case $|c_1p| = |pc_2| = 1$ and~$g$ intersects both hyperdisks
    in $(d-1)$-dimensional balls of positive radius that have disjoint 
	interiors, and therefore are tangent at $p$.
\end{itemize}
The last case may be harder to visualize. In $\Reals^{d+1}$ such case
can be constructed as follows, see Figure~\ref{fig:caseiii} for $d=2$. 
Take two hyperdisks in~$\Reals^d$ that intersect, let~$p$ be a point 
where the boundaries of the hyperdisks intersect, let~$g$ be a 
$(d-1)$-dimensional affine subspace through $p$ that does not contain
other intersection points of the hyperdisks and that goes through the 
interior of both hyperdisks, and rotate one of the hyperdisks around $g$.
The point $p$ remains the touching point of the hyperdisks and
$g$ intersects the interior of both hyperdisks.
\begin{figure}[ht]
  \centerline{\includegraphics[page=1]{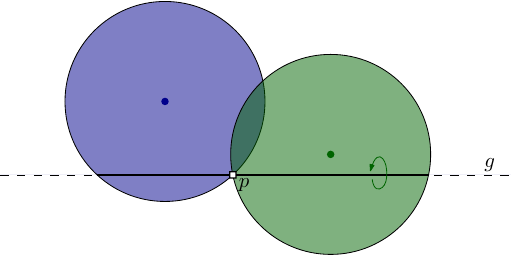}}
  \caption{Example showing case~(iii) for $d=2$, which is embedded 
	in~$\Reals^3$. The disk on the right is rotated around 
	the line $g$ in $\Reals^3$ and then both disks touch only at $p$.}
  \label{fig:caseiii}
\end{figure}
  
We have the following lemmas.
\begin{lemma}
  \label{lem:sinxi}
  Let~$D_1$ and~$D_2$ be two unit hyperdisks whose normal vectors form an
  angle~$\xi$. Then~$\ds(D_1,D_2) \geq \sin\xi$.
\end{lemma}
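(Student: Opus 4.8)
The plan is to show that in the touching configuration realizing $\ds(D_1,D_2)$ --- the one captured by the three cases (i)--(iii) preceding the lemma --- the two centers are always at least $\sin\xi$ apart. First I would fix coordinates with the touching point $p$ at the origin and the line $\ell=h_1\cap h_2$ as the $z$-axis (its direction being $n_1\times n_2$); then both supporting planes $h_1,h_2$ pass through the origin. Writing $\gamma$ for the dihedral angle between the half-planes $\hh_1$ and $\hh_2$ along $\ell$, the angle between the planes equals the angle between their normals up to a supplement, so $\sin\gamma=\sin\xi$.

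The single geometric fact driving the whole proof is: if a point $c$ of $h_1$ lies at distance $d$ from $\ell$, then its distance to the plane $h_2$ equals $d\sin\gamma=d\sin\xi$. I would prove this by slicing with the plane through the foot of the perpendicular from $c$ to $\ell$ and orthogonal to $\ell$: in that slice $h_1,h_2$ appear as two lines meeting at angle $\gamma$, the point $c$ sits on the first at distance $d$, and its distance to the second is $d\sin\gamma$; since the slice is orthogonal to $\ell\subset h_2$, this slice-distance is exactly the distance to $h_2$. Applying this with $d_i:=\mathrm{dist}(c_i,\ell)\in[0,1]$ and using $c_j\in h_j$ gives the cheap bound $|c_1c_2|\ge\mathrm{dist}(c_i,h_j)=d_i\sin\xi$ for $\{i,j\}=\{1,2\}$.

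In cases (i) and (ii) this already finishes the argument: tangency of one disk to $\ell$ at $p$ forces the distance from its center to $\ell$ to equal the radius, i.e.\ $d_i=1$, whence $|c_1c_2|\ge\sin\xi$. The remaining work is case (iii), where $|pc_1|=|pc_2|=1$ but both $d_i<1$, so the cheap bound is too weak. Here I would compute $|c_1c_2|$ directly. Decompose $c_i=d_i u_i+b_i\hat z$, where $u_i$ is the unit vector in $h_i$ orthogonal to $\ell$ pointing into $\hh_i$ (so $u_1\cdot u_2=\cos\gamma$), $\hat z$ is along $\ell$, and $b_i$ is the height of $c_i$ along $\ell$ with $b_i^2=1-d_i^2$. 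The crucial use of the hypothesis is that the two chords $\ell\cap D_i$ have disjoint interiors and both have $p$ as an endpoint, forcing $b_1$ and $b_2$ to have opposite signs; hence $(b_1-b_2)^2=\big(\sqrt{1-d_1^2}+\sqrt{1-d_2^2}\big)^2$ rather than a difference. This yields
\[
  |c_1c_2|^2 = 2 - 2d_1d_2\cos\gamma + 2\sqrt{(1-d_1^2)(1-d_2^2)}.
\]
Subtracting $\sin^2\xi=1-\cos^2\gamma$ and completing the square rewrites the gap as a sum of three manifestly nonnegative terms,
\[
  |c_1c_2|^2-\sin^2\xi = (\cos\gamma-d_1d_2)^2 + (1-d_1^2d_2^2) + 2\sqrt{(1-d_1^2)(1-d_2^2)} \ge 0,
\]
using $d_1,d_2\in[0,1]$, which gives $|c_1c_2|\ge\sin\xi$.

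I expect the main obstacle to be case (iii): one must extract from ``the chords have disjoint interior'' exactly the sign information that makes the $z$-separation add rather than cancel, and one must check that the bound does not secretly depend on whether $\cos\gamma=+\cos\xi$ or $-\cos\xi$. The completed-square identity sidesteps the latter worry, since only $\sin^2\xi=\sin^2\gamma$ and $|d_1d_2|\le1$ enter; verifying the squared-distance formula itself (the decomposition of $c_i$ and the opposite-sign claim) is the one genuinely geometric step that needs care. Cases (i)--(ii) and the slicing fact are then routine once the coordinate frame is fixed.
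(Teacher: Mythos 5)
Your proof is correct, and it follows the paper's skeleton --- the same case analysis (i)--(iii), with cases (i) and (ii) handled exactly as in the paper by noting that the tangent disk's center lies at distance $\sin\xi$ from the other disk's supporting plane --- but your treatment of case (iii) is genuinely different. The paper disposes of case (iii) in one line: in the isosceles triangle $c_1pc_2$ with $|c_1p|=|pc_2|=1$ it sets $\gamma:=\measuredangle c_1pc_2$, asserts $\gamma\geq\xi$, and concludes $|c_1c_2|=2\sin(\gamma/2)\geq 2\sin(\xi/2)\geq\sin\xi$. You instead compute $|c_1c_2|^2$ in explicit coordinates and exhibit $|c_1c_2|^2-\sin^2\xi$ as a sum of nonnegative terms; I checked the decomposition $c_i=d_iu_i+b_i\hat z$, the squared-distance formula, and the completed square, and they are all correct, as is your derivation of $b_1b_2\leq 0$ from the disjointness of the chord interiors (both chords do have $p$ as an endpoint, since otherwise their interiors would meet). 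What your longer route buys is precisely the step the paper leaves implicit: the claim $\measuredangle c_1pc_2\geq\xi$ is exactly the inequality $\cos\measuredangle c_1pc_2=d_1d_2\cos\gamma+b_1b_2\leq\cos\gamma$, which rests on the same sign observation $b_1b_2\leq 0$ that you isolate and prove; your version also makes it transparent that nothing depends on whether the dihedral angle is $\xi$ or $\pi-\xi$, since only $\sin^2\gamma=\sin^2\xi$ enters. The paper's argument is shorter and more geometric but asks the reader to supply that angle bound; yours is more computational but fully self-contained.
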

\begin{proof}
  The angle formed by~$\hh_1$ and~$\hh_2$ is either~$\xi$
  or~$\pi-\xi$.  By possibly negating~$n_2$, we can assume the former.
  The claim is then obvious in case~(i), as~$c_1$ has
  distance~$\sin\xi$ from the hyperplane~$h_2$, and case~(ii) is
  symmetric.

  In case~(iii) we consider the isosceles triangle~$\triangle c_1pc_2$ 
  and set~$\gamma := \measuredangle c_1 p c_2$.  
  We will show~$\gamma \geq \xi$, so we
  have $|c_1 c_2| = 2 \sin(\gamma/2) \geq 2\sin(\xi/2) \geq \sin\xi.$

  To see~$\gamma \geq \xi$, let~$g^{\perp}$ be the $2$-dimensional 
  affine subspace orthogonal to $g$ and through~$p$.
  Let~$c'_i$ be the orthogonal projection of~$c_i$ on~$g$, 
  and let~$c^{\perp}_i$ be the orthogonal projection of $c_i$ on $g^{\perp}$. 
  See Figure~\ref{fig:caseiii2}.  Because we are using orthogonal projections
  on orthogonal affine subspaces through $p$, 
  we have $c_i-c^\perp_i=c'_i-p$. Moreover, the vectors $(c_i-c'_i)$ 
  and $(c'_j-p)$ are orthogonal for $i,j\in \{1,2\}$.
  We will also use that the orthogonal projection of a segment cannot 
  be longer than the original segment. Therefore 
  $|c^\perp_i p|\le |c_i p|=1$ and $|c'_1c'_2|\le |c_1 c_2|$.
  \begin{figure}[ht]
    \centerline{\includegraphics[page=2]{caseiii}}
    \caption{Proof of Lemma~\ref{lem:sinxi} for case~(iii).}
    \label{fig:caseiii2}
  \end{figure}

  Since $D_1\cap g$ and $D_2\cap g$ are interior disjoint 
  $(d-1)$-dimensional balls of positive radius tangent at $p$ 
  and contained in the $(d-1)$-dimensional subspace $g$, 
  their centers have to be collinear with $p$.
  Noting that $c'_i$ is the center of $D_i\cap g$,
  we get that the points $c'_1$, $p$ and $c'_2$ are collinear
  and in that order. Therefore $(c'_1-p)\cdot (c'_2-p)<0$.
 
  If $c'_i=c_i$, then $p=c^\perp_i$ and
  thus $|c_1c_2|\ge |c'_1 c'_2| \ge |p c'_i|=1 \ge \sin \xi$.
  Therefore, we consider only the case where~$c'_1\neq c_1$ and 
  $c'_2\neq c_2$, which means that $p\neq c^\perp_1$ and $p\neq c^\perp_2$.
  We then have $\measuredangle c^\perp_1 p c^\perp_2 = \xi$
  because rotating $D_2$ an angle of $\xi$ around $g$ would place
  $D_2$ in the hyperplane supporting $D_1$.
  We have
  \[
	(c^\perp_1 - p) \cdot (c^\perp_2 - p) = 
       |c^\perp_1 p|\cdot |c^\perp_2 p| \cdot \cos \xi \le 
	   |c_1 p|\cdot |c_2 p|\cdot \cos \xi = \cos \xi,
  \]
  and therefore   
  \begin{align*}
    \cos\gamma & = (c_1 - p)\cdot(c_2 - p) \\
    & = \big((c_1 - c'_1) + (c'_1 - p)\big)\cdot\big((c_2 - c'_2) + (c'_2 - p)\big) \\
    & = (c_1 - c'_1)\cdot (c_2 - c'_2)
    + (c'_1 - p)\cdot (c'_2 - p) \\
    & < (c^\perp_1 - p)\cdot (c^\perp_2 - p) + 0 \\
	& \leq \cos\xi. \qedhere
  \end{align*}
\end{proof}

\begin{lemma}
  \label{lem:upsinxi}
  Let~$D_1$ and~$D_2$ be two unit hyperdisks whose normal vectors form an
  angle~$\xi$, and such that their normals make an angle of at
  most~$\phi < \pi/2$ with the direction~$s$. Then~$\ds(D_1,D_2) \leq
  \frac{\sin\xi}{\cos \phi}$.
\end{lemma}
\begin{proof}
  Again we can assume that the angle between~$\hh_1$ and~$\hh_2$
  is~$\xi$.  This means that~$c_2$ has distance~$t \leq \sin\xi$
  from~$h_1$.  Let~$c'_2$ be the point at distance~$t$ on the ray
  from~$c_1$ with direction~$n_1$. Since the ray~$c_1 c_2$ has
  direction~$s$, then~$\measuredangle c'_2 c_1 c_2
  \leq \phi$ and it follows that~$t/|c_1 c_2| = 
  \cos (\measuredangle c'_2 c_1 c_2) \geq \cos\phi$.
  This implies~$|c_1 c_2| \leq t/\cos\phi$ and the claim follows.
\end{proof}

\section{Stabbing helps to pack similar unit hyperdisks}
\label{sec:similar}

We fix the angle~$\phi_0 = \arccos(1/\sqrt{d+1})$. (For $d = 2$,
$\phi_0 \approx 54.7^{\circ}$.) This is the angle between the diagonal
and an edge of the $(d+1)$-dimensional hypercube.  In this section we
consider a set~$\disks$ of unit hyperdisks whose normal vector makes an
angle of at most~$\phi_0$ with the positive~$x_{d+1}$-axis, that is,
with the vector~$(0, 0, \dots, 0, 1)$.  We will show that the \stab
and \packAA problems are related by proving that an optimal solution
to~\stab for~$\disks$ with respect to direction~$s = (0, 0, \dots, 0,
1)$ provides a constant-factor approximation to~\packAA for~$\disks$.
In the next section we will then extend this to arbitrary sets of unit
hyperdisks.

Let us first define~$E$ to be the maximum \emph{extent} of all the
hyperdisks.  Formally, let~$E$ be the smallest axis-parallel box that
contains all the hyperdisks if we place them with their center at the
origin, and let~$E_1, E_2, \dots, E_{d+1}$ be the dimensions of~$E$.
We have 
\[
2 \geq E_i \geq 2 \sin(\pi/2 - \phi_0) = 2 \cos\phi_0 = 2/
\sqrt{d+1}, ~~~~~\text{for~$1 \leq i \leq d$}. 
\]

Let now~$M$ denote a minimum-volume axis-parallel box into
which~$\disks$ can be packed---that is, an optimal solution to
\packAA---and denote its dimensions by~$M_1, M_2, \dots, M_{d+1}$.  Its
volume is~$\OPT = M_1 M_2 \dots M_{d+1}$.

When we project~$M$ onto a hyperplane normal to the $x_{d+1}$-axis, we
obtain a $d$-dimensional box of size~$M_1 \times M_2 \times \dots
\times M_d$.  We place a grid of $n_1 \times n_2 \times \dots \times
n_d$~points inside this box with grid size~$\mu = 1/\sqrt{d(d+1)}$,
starting at the point at distance~$1/{\sqrt{d+1}} + {\mu}/{2}$ from
the boundary, and stopping when we have passed the same distance from
the opposite boundary. See the left side of Figure~\ref{fig:gridstab}.  
We have
\[
n_i = \left\lceil \frac{M_i - \frac 2{\sqrt{d+1}}}{\mu} \right\rceil <
\frac{M_i - \frac 2{\sqrt{d+1}} + \mu}{\mu} < \frac{M_i}\mu, ~~~~~
 \text{for~$1 \leq i \leq d$.}
\]
\begin{figure}[ht]
  \centerline{\includegraphics[width=\textwidth]{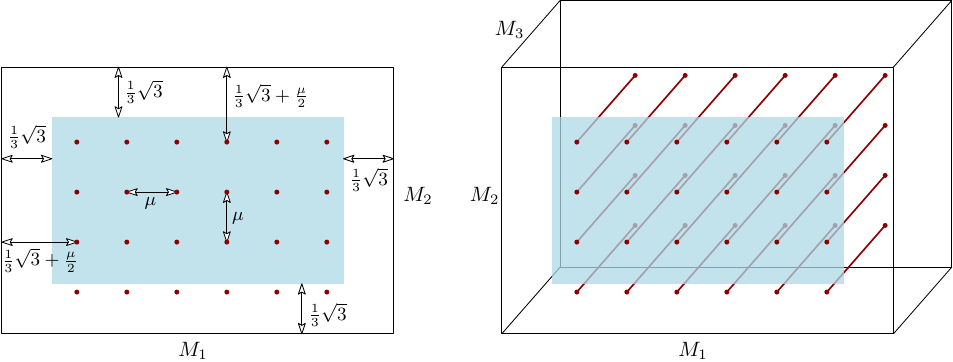}}
  \caption{Left: Placing a grid on a $d$-dimensional face of the optimal 
		container~$M$, for~$d=2$. 
		Right: the vertical segments stabbing the hyperdisks.}
  \label{fig:gridstab}
\end{figure}

We observe that since the hyperdisks lie entirely in~$M$, the
projection of a hyperdisk center on the box must have distance at
least~$1/\sqrt{d+1}$ from the box boundary (that is, it must lie in
the shaded region of Figure~\ref{fig:gridstab}), so it has distance at
most~$\frac{\mu}{2}\sqrt{d}$ to the nearest grid point.

For each of the~$n_1 \times \cdots \times n_d$ grid points, we
consider the \emph{vertical} segment of length~$M_{d+1}$ obtained by intersecting~$M$
with a line parallel to the~$x_{d+1}$-axis through the grid point.  
See the right side of Figure~\ref{fig:gridstab}.  By
our construction, every disk center has distance at
most~$\frac\mu2\sqrt{d}$ to one of these segments.  This means that
such a segment stabs the hyperdisk in a point~$q$ at distance at
most~$\frac\mu2\frac{\sqrt{d}}{\sin(\pi/2 - \phi_0)} = \frac\mu2\sqrt{
  d(d+1)}= \frac 12$ from the disk center.  The smaller disk of
radius~$1/2$ around~$q$ lies entirely inside the original disk, and
we replace each original disk with this smaller disk.
With this, we have obtained $n$ hyperdisks of radius $1/2$ that
are packed inside the same box and such that each of them is pierced 
through the center by one of the vertical segments through the grid points.

We can now place all $n_1 n_2 \cdots n_d$ vertical segments behind
each other and obtain a hyperdisk stabbing of length~$n_1 n_2 \cdots
n_d M_{d+1}$ for~$\disks$, where the hyperdisks have now radius~$1/2$. 
Simply enlarging this stabbing by factor~$2$ results
in a solution to~\stab for~$\disks$ of length
\begin{align*}
  2 \cdot n_1 n_2 \cdots n_d \cdot M_{d+1} &
  < 2 \cdot \frac {M_1}{\mu} \frac{M_2}{\mu}
  \cdots \frac{M_d}{\mu} \cdot M_{d+1} \\
  & = \frac{2}{\mu^d} \cdot M_1 M_2 \cdots M_{d+1} = \frac{2}{\mu^d}
  \OPT \\
  & = 2\sqrt{\big(d(d+1)\big)^{d}} \OPT
  < 2 (d+1)^{d} \OPT.
\end{align*}

\begin{theorem}
  Let~$\disks$ be a family of unit hyperdisks whose normals make an angle
  of at most~$\phi_0 = \arccos(1/\sqrt{d+1})$ with
  the~$x_{d+1}$-axis. An optimal solution to~\stab for~$\disks$ with
  respect to the~$x_{d+1}$-direction has length at
  most~$2(d+1)^{d}\cdot \OPT$, where~$\OPT$ is the volume of an
  optimal solution to~\packAA for~$\disks$. The \stab-solution thus
  provides a $((2^{d+1} (d+1)^{d} + 1))$-approximation for~\packAA.
  \label{thm:similar}
\end{theorem}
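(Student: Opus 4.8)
The plan is to establish the two assertions separately, since the first is essentially already in hand. The bound ``optimal \stab-length $\leq \frac{81}{8}\OPT$'' is exactly what the grid construction preceding the theorem delivers: stacking the $n_x n_y$ axis-parallel segments behind one another and scaling the resulting radius-$1/3$ stabbing by~$3$ produces a \emph{valid} stabbing of $\disks$ of length $3 n_x n_y M_z < \frac{81}{8}\OPT$, so the optimum can only be shorter. The remaining and genuinely new work is the converse: turning a short stabbing back into a small box, which is what yields the $42$-approximation.

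For that, I would take an optimal stabbing~$O$, of length $L \leq \frac{81}{8}\OPT$, and realize it geometrically by placing the disk centers $c_1,\dots,c_k$ on the $z$-axis in the order~$O$ with consecutive disks touching. The first thing to verify is that this is a \emph{non-overlapping} packing, not merely a pairwise-consecutive one. Here the metric property from Theorem~\ref{thm:metric} is the essential tool: for any $i<j$ we have $|c_i c_j| = \sum_{m=i}^{j-1}\ds(D_m,D_{m+1}) \geq \ds(D_i,D_j)$ by repeated use of the triangle inequality, and since $\overrightarrow{c_i c_j}$ points in direction $s=(0,0,1)$ while $\ds(D_i,D_j)$ is precisely the center distance at which $D_i$ and $D_j$ touch for this direction, the two disks cannot overlap. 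Thus every pair is separated, and the realized stabbing is a legitimate packing.

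Next I would simply read off the bounding box. All centers lie on the $z$-axis, so in the $x$- and $y$-directions each disk reaches at most $E_x/2$ and $E_y/2$ from the axis, while in the $z$-direction the centers span an interval of length~$L$ and each disk contributes at most $E_z/2$ beyond the extreme centers at either end. Hence the packing fits into an axis-parallel box of dimensions $E_x \times E_y \times (L+E_z)$, of volume $E_x E_y (L+E_z) = E_x E_y L + E_x E_y E_z$. To bound this against $\OPT$ I use $E_x, E_y \leq 2$ on the first summand, and on the second the observation that the optimal container~$M$ must accommodate each disk individually, forcing $M_x \geq E_x$, $M_y \geq E_y$, $M_z \geq E_z$ and therefore $E_x E_y E_z \leq M_x M_y M_z = \OPT$. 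Together with $L \leq \frac{81}{8}\OPT$ this gives a packing of volume at most $4\cdot\frac{81}{8}\OPT + \OPT = \frac{83}{2}\OPT < 42\,\OPT$, the claimed approximation factor.

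I expect the volume accounting to be routine; the one step demanding real care is the global non-overlap of the realized stabbing, and this is exactly where the fact that $\ds$ is a metric is indispensable—without the triangle inequality, non-consecutive disks could in principle intersect and the construction would fail to be a valid packing. The secondary point worth isolating is the control of the additive $E_z$-term via $E_x E_y E_z \leq \OPT$, since a cruder estimate would leave a stray additive constant instead of a clean multiplicative bound.
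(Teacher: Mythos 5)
Your proposal is correct and follows essentially the same route as the paper: the first claim via the grid construction, and the converse by enclosing the realized stabbing in a box of dimensions $E_x \times E_y \times (L+E_z)$, using the triangle inequality of Theorem~\ref{thm:metric} to guarantee global non-overlap and the bound $E_x E_y E_z \leq \OPT$ to absorb the additive term. Your write-up merely makes explicit the chaining of the triangle inequality that the paper states in one parenthetical sentence.
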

\begin{proof}
  Since an optimal solution to~\packAA implies a stabbing of length at
  most~$2(d+1)^{d}\cdot \OPT$, an optimal stabbing has at most this
  length, implying the first claim.

  The smallest axis-parallel box enclosing the stabbing is a valid
  packing of the hyperdisks. Here we make use of the triangle
  inequality from Theorem~\ref{thm:metric}: if consecutive hyperdisks
  do not overlap, then all hyperdisks are non-overlapping.
  If~$L_{d+1}$ is the length of the stabbing, its volume is at
  most~$E_1 E_2\cdots E_d (L_{d+1} + E_{d+1})$ (the $E_{d+1}$-term is
  needed because we measure the length of a stabbing only from the
  first to the last center).  Since $E_i \leq 2$ for $1\leq i\leq n$ 
  and $L_{d+1} \leq 2(d+1)^{d}\cdot \OPT$, this is at most
  \[
  2^{d} \cdot 2(d+1)^{d} \cdot \OPT + E_1 E_2 \cdots E_{d+1} \leq
  (2^{d+1} (d+1)^{d} + 1) \cdot 
  \OPT ,
  \]
  where we made use of the fact that~$\OPT \geq E_1 E_2 \cdots E_{d+1}$.
\end{proof}

\section{Stabbing helps to pack general unit hyperdisks inside an axis-parallel box}
\label{sec:general}

We now address the general \packAA problem. Given a set~$\disks$
of unit hyperdisks, we partition it into $d+1$~subsets~$\disks_1,
\disks_2, \dots, \disks_{d+1}$, depending on which coordinate axis the
normal vector makes the smallest angle with (this corresponds simply
to the highest absolute coordinate value in the normal vector).  We
then consider an optimal stabbing for~$\disks_i$ with respect to
the~$x_i$-axis, for $1 \leq i \leq d+1$. Note that the normals for
each subfamily make an angle of at most~$\phi_0$ with its
corresponding axis.

\begin{theorem}
  Optimal stabbings for~$\disks_i$ with respect to the $x_i$-axis, for
  $1 \leq i \leq d+1$, can be used to provide a
  $3\big(3d+3\big)^{d+1}$-approximation for~\packAA for~$\disks$.
  \label{thm:general}
\end{theorem}
\begin{proof}
  Again we define~$E$ (of dimension~$E_1 \times E_2 \times \dots
  \times E_{d+1}$) to be the \emph{extent} of all the hyperdisks (in the
  entire set~$\disks$).  If $\disks_i = \emptyset$ for all but one
  index~$i$, then we are done by Theorem~\ref{thm:similar}, so we
  assume this is not the case.  This implies that for all extent
  dimensions we have
  \[
  2/\sqrt{d+1} \leq E_i \leq 2.
  \]
  Since $\OPT \geq E_1 E_2 \cdots E_{d+1}$, we have~$\OPT \geq
  2^{d+1}/(d+1)^{(d+1)/2}$, and therefore $(d+1)^{d}\OPT \geq 1$.

  Consider now an index~$i \in \{1,2,\dots,d+1\}$ where~$\disks_i \neq
  \emptyset$. Let~$\ell^*$ be the length of an optimal stabbing for the
  set~$\disks_i$ in the direction of the~$x_i$-axis, and let~$s$ be a
  segment of length~$\ell^*$ parallel to the~$x_i$-axis that stabs the
  hyperdisks in~$\disks_i$.  We partition~$s$ into $\lceil \ell^* \rceil$
  pieces of length at most one.  For each piece, we take the smallest
  axis-parallel box containing the hyperdisks whose centers lie on the
  piece. Each box has dimensions at most~$E_1 \times \dots \times
  E_{i-1} \times (1 + E_i) \times E_{i+1} \times \dots \times
  E_{d+1}$, so it fits into a hypercube of side length~$3$.
  The number of these hypercubes needed for~$\disks_i$ is
  \[
  \lceil \ell^* \rceil \leq \ell^* + 1 
  \leq 2(d+1)^{d}\cdot \OPT + 1
  \leq 3(d+1)^{d}\cdot \OPT.
  \]

  Repeating this for all~$d+1$ sets~$D_i$, we find that we can
  pack~$\disks$ into
  \[
  (d+1)\cdot 3 \cdot (d+1)^{d} \cdot \OPT = 3\cdot(d+1)^{d+1} \cdot \OPT
  \]
  hypercubes of side length~3, with a total volume of
  \[
  3^{d+1} \cdot 3\cdot(d+1)^{d+1} \cdot \OPT =
  3\cdot\big(3d+3\big)^{d+1} \cdot \OPT.
  \]
  We can stack the hypercubes along any coordinate direction to obtain
  an axis-parallel box packing~$\disks$ of the same volume.  Note
  that, at the cost of an additional constant factor of, say,
  $2^{d+1}$, we could even pack~$\disks$ into a hypercube. This is for
  example obtained by doubling each dimension of the hypercube
  container whenever it becomes too small to continue placing the
  cubes of side length~$3$.
\end{proof}

\section{Approximating the optimal stabbing and packing inside an axis-parallel box}
\label{sec:solving}

It remains to show how to actually solve the \stab problem.
\begin{theorem}
  A $\frac 32$-approximation for \stab can be computed in polynomial time.
  \label{thm:approxStab}
\end{theorem}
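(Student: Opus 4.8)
The plan is to reduce \stab to a shortest Hamiltonian path problem in a metric graph. Given the input disks $D_1,\dots,D_n$, I would form the complete graph $G$ on the vertex set $\{D_1,\dots,D_n\}$ with edge weights $w(D_i,D_j)=\ds(D_i,D_j)$. By the definition of the length of an ordering, when the centers are placed on a line with consecutive disks touching, consecutive centers are exactly $\ds$ apart; hence the length of an ordering $D_{\pi(1)},\dots,D_{\pi(n)}$ equals $\sum_{i=1}^{n-1}\ds(D_{\pi(i)},D_{\pi(i+1)})$, which is precisely the weight of the Hamiltonian path that visits the vertices in the order $\pi$. Thus \stab is exactly the problem of computing a minimum-weight Hamiltonian path in $G$ with \emph{unspecified} endpoints. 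The weights can be computed from the normal vectors as discussed after Theorem~\ref{thm:metric}.

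The essential point is that, by Theorem~\ref{thm:metric}, $\ds$ is a metric, so the weights of $G$ satisfy the triangle inequality. For complete metric graphs, the minimum Hamiltonian path problem with free endpoints admits a polynomial-time $\tfrac32$-approximation, via Hoogeveen's path variant of Christofides' algorithm, and I would invoke this result directly. Concretely, the algorithm computes a minimum spanning tree $T$ of $G$, then adds a minimum-weight set of edges on the odd-degree vertices of $T$ so that exactly two vertices end up with odd degree; the resulting multigraph admits an Eulerian path between these two vertices, which is then shortcut---using the triangle inequality of $\ds$---into a Hamiltonian path of no greater weight.

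For the approximation ratio I would use two bounds. First, every Hamiltonian path is itself a spanning tree, so $c(T)\le\OPT$. Second, the parity-fixing edge set can be chosen of weight at most $\tfrac12\OPT$; establishing this in the free-endpoint case, where the two unmatched vertices become the path endpoints, is precisely the content of Hoogeveen's analysis and is exactly what keeps the factor at $\tfrac32$ rather than the $\tfrac53$ of the variant with two fixed endpoints. Combining the two bounds, the returned path has weight at most $c(T)+\tfrac12\OPT\le\tfrac32\OPT$. I would cite this bound rather than reprove it.

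I expect the main obstacle to be arithmetic rather than combinatorial. The spanning-tree step needs only comparisons of individual edge weights, which reduce to comparing the exactly computable rational values $\ds^2$. The matching step, however, compares \emph{sums} of $s$-distances, that is, sums of square roots, for which no exact polynomial-time comparison is known. I would address this either by adopting the real-RAM model, or by computing each $\ds$ to a precision polynomial in the input size so that the accumulated rounding error is absorbed by the slack already present in the $\tfrac32$ bound; verifying that polynomial precision suffices, and that the matching and Euler-path constructions run in polynomial time, then completes the argument.
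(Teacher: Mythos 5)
Your proposal is correct and follows essentially the same route as the paper: build the complete graph with edge weights $\ds(D_i,D_j)$, observe that orderings correspond to Hamiltonian paths of equal length, and invoke the metric property from Theorem~\ref{thm:metric} to apply Hoogeveen's $\tfrac32$-approximation for the free-endpoint shortest Hamiltonian path. Your additional remarks on the algorithm's internals and on numerical precision go beyond what the paper states but do not change the argument.
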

\begin{proof}
  Given a set of~$n$ unit hyperdisks by their normal vectors
  and a direction vector~$s$, we construct the complete weighted
  graph~$G$ on~$n$ vertices~$\{1, 2, \dots, n\}$, where the weight of
  the edge~$(i, j)$ is~$\ds(D_i, D_j)$.  The Hamiltonian paths in this
  graph correspond one-to-one to the orderings of the hyperdisks, and
  the length of the path in the graph is equal to the length of the
  ordering with respect to~$s$.  It follows that the optimal ordering
  is given by the shortest Hamiltonian path in~$G$.

  Christofides' approximation algorithm applies to our problem, since
  by Theorem~\ref{thm:metric} the triangle inequality holds in~$G$.
  We can therefore compute a Hamiltonian path for~$G$ whose length is
  at most $3/2$~times the optimal~\cite{Hoogeveen}.
\end{proof}

We then have the following corollary.
\begin{corollary}
\label{co:general}
  A $4\big(3d+3\big)^{d+1}$-approximation to \packAA can be
  computed in polynomial time.
\end{corollary}
\begin{proof}
  In the proof of Theorem~\ref{thm:general}, we are now using a segment
  of length~$\ell \leq \tfrac{3}{2}\ell^* \leq 3(d+1)^{d}\cdot \OPT$, so $\lceil\ell\rceil \leq
  4(d+1)^{d}\cdot \OPT$, and the number of hypercubes of side
  length~$3$ packing~$\disks$ is at most
  \[
  (d+1)4(d+1)^{d} \cdot \OPT = 4 (d+1)^{d+1} \cdot \OPT,
  \]
  with a total volume of $4\cdot\big(3d+3\big)^{d+1} \cdot \OPT$.
\end{proof}

\section{Packing inside a convex container or a box with arbitrary orientation}
\label{sec:convex}

We now turn our attention to packing unit hyperdisks under
translations inside a convex container of minimum volume or a box of
minimum volume, where we can freely select the orientation of the box.
Again, our aim is to get a constant-factor approximation.

\subsection{Approximating a convex body by a box}

We will need a geometric lemma that shows that finding a box of
(approximately) minimum volume where we can pack the hyperdisks
suffices to get a constant-factor approximation to the (approximately)
smallest convex container.  We phrase the lemma in~$\Reals^d$, even
though later we will use it in~$\Reals^{d+1}$.  Similar arguments have
been used in the literature before, we include the lemma for
completeness.  The factor~$d^{3d/2}$ is not tight, in two dimensions,
for instance, it is known that factor~$2$ is
possible~\cite{SCHWARZKOPF199877}.

\begin{lemma}
  \label{lem:convexbox}
  Any convex body~$K$ in~$\Reals^d$ 
  is contained in a box of volume at most $d^{3d/2}\cdot \vol(K)$.
\end{lemma}
\begin{proof}
  Let~$B(r)$ denote the ball in~$\Reals^d$ centered at the origin 
  with radius~$r$.
  
  It is a classical result in convexity that there is a unique
  ellipsoid~$E$ of maximum volume contained in~$K$. Moreover,
  scaling~$E$ around its center by~$d$ results in an ellipsoid that
  contains~$K$.  This ellipsoid is called the \emph{L\"owner-John
  ellipsoid}; see for example the books by Barvinok~\cite[Chapter~5,
    Section~2]{barvinokbook} or Ben-Tal and
  Nemirovski~\cite[Theorem~4.9.1]{BTN01}.  Therefore, we have $\vol(E)
  \le \vol(K) \le d^d \vol(E)$.  We choose a coordinate system with
  origin at the center of~$E$ and aligned with the axes of~$E$.
  Scaling along each coordinate axis, we can transform~$E$ into the
  unit-radius ball~$B(1)$.  Let~$K'$ be the image of~$K$ under this
  scaling.
  
  We then have $B(1)\subseteq K' \subseteq B(d)$.  Inside the
  ball~$B(1)$ we inscribe the axis-parallel cube~$Q_{\rm in}$ of side
  length~$2/\sqrt{d}$, around the ball~$B(d)$ we circumscribe the
  axis-parallel cube~$Q_{\rm out}$ of side length~$2d$.  Therefore, we
  have obtained $Q_{\rm in}\subseteq K' \subseteq Q_{\rm out}$
  such that
  \[
    \vol(Q_{\rm out}) = (2d)^{d} = 
    d^{3d/2} \cdot \left(\frac{2}{\sqrt{d}}\right)^{d} = 
    d^{3d/2} \cdot \vol(Q_{\rm in})
	\leq d^{3d/2} \cdot \vol(K').
  \]
  
  Applying the inverse scaling along each coordinate axis, such
  that~$B(1)$ is transformed back into~$E$, we obtain the desired box
  because an affine transformation keeps ratios of volumes invariant.
\end{proof}

\subsection{The algorithm}

We will need another lemma that allows us to replace any box by an
axis-parallel box, assuming only that the original box is not too
thin.  We assume~$d \geq 2$.
\begin{lemma}
  \label{lem:repack}
  Assume there is a box~$K$ of arbitrary orientation where~$\disks$
  can be packed, and assume that each side of~$K$ has length at
  least~$1/2$.  Then there is an axis-parallel box~$K'$ where~$\disks$
  can be packed with
  \[
  \vol(K') \leq (2d + 4)^{d+1} \cdot \vol(K).
  \]
\end{lemma}
\begin{proof}
  We cover~$K$ with hypercubes of side length~$1$ that are aligned
  with~$K$.  The total number of these hypercubes is at most
  $2^{d+1}\vol(K)$, since we need to round up each side length to the
  nearest integer, at most doubling the side length.

  Consider now one such small hypercube~$Q$. Its diagonal has
  length~$\sqrt{d+1}$, therefore the projection of~$Q$ on any axis has
  length at most~$\sqrt{d+1} < d$.  We can therefore enclose~$Q$ in an
  axis-parallel hypercube~$Q'$ of side length~$d+2$, such that every
  unit hyperdisk whose center lies in~$Q$ is completely
  contained in~$Q'$.

  It follows that~$\disks$ can be packed into the union of the~$Q'$.
  The total volume of all~$Q'$ is
  \[
  (d+2)^{d+1}\cdot2^{d+1} \cdot\vol(K) = (2d + 4)^{d+1} \vol(K),
  \]
  and we can stack the~$Q'$ along any coordinate direction to obtain
  an axis-parallel box of this volume.
\end{proof}

\begin{theorem}
  \label{thm:arbitrarybox}
  A $4\big(3d+3\big)^{2d+2}$-approximation to \packB 
  can be computed in polynomial time.
\end{theorem}
\begin{proof}
  We select an arbitrary disk~$D_0$ from~$\disks$ and  choose a
  coordinate system where the normal of~$D_0$ is the $x_{d+1}$-axis.
  We then use the algorithm of Corollary~\ref{co:general} to compute
  an axis-parallel box with respect to this coordinate system
  where~$\disks$ can be packed.

  Let~$K_0$ be the box computed by the algorithm, and let~$K^*$ be a
  minimum-volume box (in an arbitrary orientation) where~$\disks$ can
  be packed. Let~$\OPT_B = \vol(K^*)$ be the optimal volume.

  If each side of~$K^*$ has length at least~$1/2$, then by
  Lemma~\ref{lem:repack} there is an axis-parallel (with respect to
  our coordinate system based on~$D_0$) box~$K'$ of
  volume~$(2d+4)^{d+1} \cdot \OPT_B$ where~$\disks$ can be packed.
  Corollary~\ref{co:general} guarantees that
  \[
  \vol(K_0) \leq 4(3d+3)^{d+1} \vol(K')
  \leq 4(3d+3)^{d+1} \cdot (2d+4)^{d+1} \cdot \OPT_B
  < 4(3d+3)^{2d+2} \cdot\OPT_B.
  \]

  It remains to consider the case where~$K^*$ has a side of length
  smaller than~$1/2$.  We will denote both the direction and the
  length of this side by~$s^*$.  From~$s^* < 1/2$ follows that the normal
  of every hyperdisk in~$\disks$ makes an angle at most~$\theta =
  \arcsin \frac{s^*}{2}$ with the direction~$s^*$.  We have~$\theta <
  \arcsin\frac14 \approx 14.48^{\circ} < \phi_{0}/2$.  (Recall that we
  defined the angle~$\phi_0 = \arccos(1/\sqrt{d+1}) \geq
  \arccos(1/\sqrt{3}) \approx 54.74^{\circ}$.)

  Let now~$s_0$ be the direction of the normal of~$D_0$, that is,
  the~$x_{d+1}$-axis in our chosen coordinate system.  Then the normal
  of any hyperdisk in~$\disks$ makes an angle of at most~$2\theta <
  \phi_0$ with~$s_0$.  It follows that the algorithm of
  Corollary~\ref{co:general} will compute a single disk stabbing for
  all disks in~$\disks$ with respect to the direction~$s_0$, and will
  then fit a box to this disk stabbing.

  Consider again direction~$s^*$.  Since the normal of every disk makes
  an angle of at most~$\theta < \phi_{0}$ with~$s^*$, using a suitable
  coordinate system and Theorem~\ref{thm:similar} implies that an
  optimal disk stabbing with direction~$s^*$ has length at
  most~$2(d+1)^d \OPT_B$.  Let $D_1,\dots,D_n$ be the ordering of such
  an optimal disk stabbing for direction~$s^*$.  ($D_0$ is one of these
  disks, possibly in the middle.)  By
  Lemma~\ref{lem:sinxi}, we have $d_{s^*}(D_i, D_{i+1}) \geq \sin\xi_i$,
  where~$\xi_i$ is the angle between the normals of~$D_i$
  and~$D_{i+1}$.  On the other hand, by Lemma~\ref{lem:upsinxi}, we
  have $d_{s_0}(D_i, D_{i+1}) \leq \sin\xi_{i}/\cos(2\theta) \leq
  d_{s^*}(D_i, D_{i+1})/\cos(2\theta)$.  Since $\cos(2\theta) =
  1-2\sin^2\theta = 1 - 2(\frac{s^*}{2})^2 \geq \frac78$, we
  have~$d_{s_0}(D_i, D_{i+1}) \leq \frac 87 d_{s^*}(D_i, D_{i+1})$, so
  the ordering~$D_1, \dots, D_{n}$ gives us a disk stabbing for
  direction~$s_0$ of length at most~$\frac{16}{7}(d+1)^d \OPT_B$.

  This implies that the algorithm, which computes a
  $3/2$-approximation to the optimal disk stabbing in direction~$s_0$,
  returns a disk stabbing of length at most~$\frac{24}{7}(d+1)^d
  \OPT_B \leq 4(d+1)^d \OPT_B$.  The box computed by the algorithm
  therefore has volume at most
  \[
  \vol(K_0) \leq 2^d 4(d+1)^d \OPT_B + \vol(E_0),
  \]
  where~$E_0$ is the extent of~$\disks$ in our chosen
  coordinate-system.  The sides of~$E_0$ orthogonal to~$s_0$ have
  length at most~$2$, the side of~$E_0$ in direction~$s_0$ has length
  at most~$2\sin(2\theta)$ (it is determined by a disk whose normal
  makes the largest angle with direction~$s_0$), so
  \[
  \vol(E_0) \leq 2^{d+1} \sin2\theta
  =  2^{d+1} \cdot 2 \sin\theta \cos\theta
  =  2^{d+1} \cdot s^* \cos\theta.
  \]
  On the other hand, $K^*$ has one side of length~$s^*$, and all other
  sides have length at least~$2\cos\theta = 2\cos(\arcsin \frac{s^*}{2}) >
  2\cos(\arcsin \frac14) = \frac12 \sqrt{15} > 1$. Thus
  \[
  \vol(K^*) \geq (2 \cos\theta)^{d} \cdot s^*
  = 2\cos\theta (2 \cos\theta)^{d-1} \cdot s^*
  > 2s^*\cos\theta \geq \frac{\vol(E_0)}{2^{d}}.
  \]
  It follows that~$\vol(E_0) \leq 2^{d} \cdot\OPT_B$, 
  and we finally have
  \[
  \vol(K_0) \leq 2^d (4(d+1)^d + 1) \cdot \OPT_B
  < 4(3d+3)^{2d+2} \cdot \OPT_B. \qedhere
  \]
\end{proof}

By Lemma~\ref{lem:convexbox}, the smallest box of arbitrary
orientation where we can pack~$\disks$ is a
$(d+1)^{(3d+3)/2}$-approximation to the smallest convex body where we
can pack~$\disks$. Thus, by returning the box computed in
Theorem~\ref{thm:arbitrarybox} we obtain the following.
\begin{corollary}
  A $36\cdot9^d(d+1)^{(7d+7)/2}$-approximation to \packC can be
  computed in polynomial time.
\end{corollary}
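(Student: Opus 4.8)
The plan is to combine the structural bound of Theorem~\ref{thm:general} with the algorithmic guarantee of Theorem~\ref{thm:approxStab}, paying close attention to how the $\frac{3}{2}$ stabbing-approximation factor propagates through the packing construction. First I would partition the input set~$\disks$ into~$\disksx$, $\disksy$, $\disksz$ exactly as in Section~\ref{sec:general}, and then invoke Theorem~\ref{thm:approxStab} three times---once per subfamily, with respect to its designated axis---to obtain, in polynomial time, stabbings whose lengths $L_x, L_y, L_z$ are each at most~$\frac{3}{2}$ times the corresponding optimal stabbing length.

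The key observation is that the entire geometric construction in the proof of Theorem~\ref{thm:general}---slicing each stabbing into pieces, boxing the pieces, and reassembling them into the container~$B$---never uses optimality of the stabbings. It needs only the triangle inequality (Theorem~\ref{thm:metric}) to certify that the resulting arrangement is a valid, non-overlapping packing, together with the ordering assumption that~$L_y$ is the largest of the three lengths, which we may again take without loss of generality, now for the \emph{computed} lengths. Hence the very same container~$B$ of dimensions $B_x \times B_y \times B_z$ packs~$\disks$, and the volume estimate $B_x B_y B_z \leq 8 \cdot \frac{L_y}{6} \cdot 12 + (2 + 2\sqrt{3})(4 + 2\sqrt{3}) E_x E_y E_z$ holds verbatim.

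I would then re-run the final estimate with the weaker bound on~$L_y$. The crucial point---and the only place where care is required---is that the~$\frac{3}{2}$ factor multiplies \emph{only} the first, length-dependent summand: substituting $L_y \leq \frac{3}{2} \cdot \frac{81}{8} \OPT$ (from Theorem~\ref{thm:similar} combined with the~$\frac{3}{2}$-approximation) turns the $162 \cdot \OPT$ term into $243 \cdot \OPT$, while the second summand $(20 + 12\sqrt{3}) E_x E_y E_z \leq 41 \cdot \OPT$ depends only on the extent~$E$ and is therefore untouched. Adding these gives $243 + 41 = 284$. The degenerate case $\disksx = \disksz = \emptyset$ reduces to the approximate version of Theorem~\ref{thm:similar} and yields a smaller factor (around~$62$), so the bound $284 \cdot \OPT$ dominates; polynomiality follows since each of the three stabbings is computed in polynomial time and the reassembly is a fixed construction.

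I would flag as the main subtlety exactly this split: a naive composition of the two results would report $\frac{3}{2} \cdot 203 \approx 305$, but because the additive extent term is independent of the stabbing lengths and hence does not scale with the approximation factor, the true factor is the sharper~$284$. Everything else is a direct substitution into inequalities already verified in the proof of Theorem~\ref{thm:general}.
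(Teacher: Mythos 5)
Your proposal is correct and follows essentially the same route as the paper: compute $\frac32$-approximate stabbings via Theorem~\ref{thm:approxStab}, rerun the container construction of Theorem~\ref{thm:general} with the weaker bound $L_y \leq \frac32 \cdot \frac{81}{8}\,\OPT$, and observe that only the length-dependent term $162\cdot\OPT$ scales to $243\cdot\OPT$ while the extent term $(20+12\sqrt{3})\OPT$ is unaffected, giving $284$. Your explicit checks (that the construction never uses optimality of the stabbings, that the "largest length" relabeling applies to the computed lengths, and that the degenerate case yields a smaller factor) are details the paper leaves implicit, but the argument is the same.
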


\section{The volume needed to stab or pack $n$~hyperdisks}
\label{sec:unbounded}

While in two dimensions all (infinitely many) unit-length line
segments can be packed into a single fixed rectangle
(Figure~\ref{fig:segment-packing}), this is not true in higher
dimensions.  We will show that volume~$O(n^{\frac{d-1}{d}})$ is
sometimes needed but also always sufficient to stab~$n$ similar
unit hyperdisks in~$\Reals^{d+1}$.  This will imply
similar lower bounds for packing inside a convex container
and similar upper bounds for packing inside an axis-parallel box.

\begin{theorem}
  There is a family~$\disks$ of~$n$ unit hyperdisks in~$\Reals^{d+1}$ whose
  normals make an angle of at most~$\phi_0 = \arccos(1/\sqrt{d+1})$
  with the positive~$x_{d+1}$-axis, such that,
  for each subfamily~$\disks'\subset \disks$, any stabbing of~$\disks'$
  in any direction~$s$ needs length~$\Omega(|\disks'|/n^{1/d})$.
  \label{thm:unbounded}
\end{theorem}
\begin{proof}
  The claim is obviously true for~$d = 1$, so we assume~$d \geq 2$.
  We construct a set~$\disks$ of~$n$ hyperdisks as follows: we pick a
  $d$-dimensional hypercube with side length~$c$ in the hyperplane
  normal to the~$x_{d+1}$-axis that is centered at the origin, for
  some constant~$c > 0$. We partition the hypercube into grid cells
  with side length~$\eps = c/n^{1/d}$, and project each grid point in
  the positive $x_{d+1}$-direction onto the unit-radius sphere, see
  Figure~\ref{fig:GridOnSphere}.  These~$n$ points define the normal
  vectors for our set~$\disks$ of hyperdisks.  When~$c$ is small
  enough (depending on $d$), all normals make an angle of less 
  than~$\phi_0$ with the~$x_{d+1}$-axis.
  \begin{figure}[ht]
    \centerline{\includegraphics{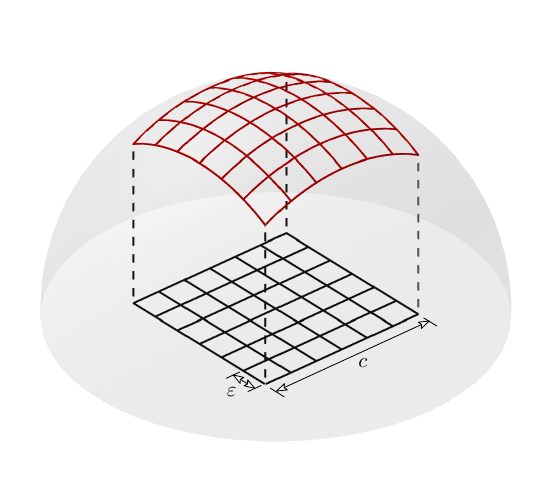}}
    \caption{Projecting a grid onto the unit sphere.}
    \label{fig:GridOnSphere}
  \end{figure}

  We observe next that the angle made by any two of the normal vectors
  is at least~$\eps$. Indeed, since any two grid points~$p_1, p_2$
  have distance at least~$\eps$, so do their projections~$p'_1$
  and~$p'_2$. The angle~$\measuredangle p'_1 o p'_2$ is the length of
  the great circle arc between~$p'_1$ and~$p'_2$ on~$\mathbb{S}^{d}$,
  so it is longer than~$|p'_1p'_2|$, see Figure~\ref{fig:proj}.
  \begin{figure}[ht]
    \centerline{\includegraphics{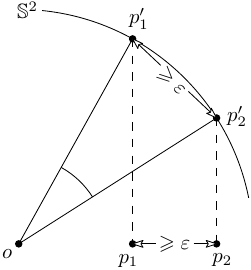}}
    \caption{Two grid points $p_1,p_2$ and their projection onto $\mathbb{S}^2$.}
    \label{fig:proj}
  \end{figure}

  Using Lemma~\ref{lem:sinxi} and the inequality $\sin x\ge x/2$ 
  for all~$x\in [0,\pi/2]$, this implies that for any two 
  hyperdisks~$D_1, D_2 \in \disks$ and any direction~$s$ we have 
  $\ds(D_1, D_2) \geq \sin\eps \geq \eps/2$.
  Therefore, for any~$\disks'\subseteq\disks$, any stabbing 
  of~$\disks'$ has length at
  least~$(|\disks'|-1)\cdot \frac{\eps}{2} = \Omega(|\disks'| / n^{1/d})$.  
\end{proof}

The arguments made in Section~\ref{sec:similar} imply the following
consequence of Theorem~\ref{thm:unbounded}.

\begin{theorem}
  There is a family~$\disks$ of~$n$ unit hyperdisks in~$\Reals^{d+1}$ 
  such that any convex container where we can pack $\disks$
  under translations has volume~$\Omega(n^{\frac{d-1}{d}})$.
  In particular, there is no convex container of bounded volume 
  into which all unit hyperdisks in~$\Reals^{d+1}$ can be packed under 
  translations.
\end{theorem}
\begin{proof}
  Consider the set~$\disks$ of hyperdisks of Theorem~\ref{thm:unbounded}.
  Let~$\OPT_C$ be the volume of an optimal convex container for~$\disks$.
  We have to show that $\OPT_C=\Omega(n^{\frac{d-1}{d}})$.
  
  Because of Lemma~\ref{lem:convexbox} there is a box~$M$ of
  volume at most~$(d+1)^{3(d+1)/2}\cdot \OPT_C$ where we can pack~$\disks$.
  We break~$\disks$ into $d+1$~groups $\disks_1,\dots,\disks_{d+1}$
  depending on which of the $d+1$ directions of the edges of~$M$ 
  minimize the angle with the normal.  
  Because of the pigeonhole principle, one of the groups, say $\disks_1$,
  has at least~$n/(d+1)$ hyperdisks. Let~$s$ be the corresponding
  direction of the edge of~$M$ minimizing the angle with the hyperdisks
  of~$\disks_1$. 
  Because of Theorem~\ref{thm:unbounded}, any stabbing of $\disks_1$
  has length at least $\Omega((n/(d+1))/n^{1/d}) = \Omega(n^{\frac{d-1}{d}})$.
  This is in particular true for the stabbing in the direction~$s$.
  Using Theorem~\ref{thm:similar} (in the second inequality of the 
  forthcoming chain), we get 
  \[
	(d+1)^{3(d+1)/2}\cdot \OPT_C \ge \vol(M) \ge \frac{\Omega(n^{\frac{d-1}{d}})}{2(d+1)^d} = 
    \Omega(n^{\frac{d-1}{d}}),
  \]
  which implies that $\OPT_C= \Omega(n^{\frac{d-1}{d}})$.
\end{proof}

We now prove the reverse direction for stabbing.
\begin{theorem}
  Let $\disks$ be a family of~$n$ unit hyperdisks in~$\Reals^{d+1}$ whose
  normals make an angle of at most~$\phi_0 = \arccos(1/\sqrt{d+1})$
  with the positive~$x_{d+1}$-axis.  Then there is a stabbing
  for~$\disks$ of length~$O(n^{\frac{d-1}{d}})$.
  \label{thm:stab-upper}
\end{theorem}
\begin{proof}
  Let~$S \subset \mathbb{S}^d$ be the set of directions of the unit normals
  of~$\disks$.  There is then a tour that visits the points of~$S$ one
  by one, and such that the sum of angles between consecutive points
  is bounded by~$O(n^{\frac{d-1}{d}})$.  This can be seen by first
  arguing that a minimum spanning tree of this cost
  exists~\cite{few1955shortest}, replacing that by an Euler tour, and
  finally shortcutting points that have been visited before.
  The claim now follows immediately from Lemma~\ref{lem:upsinxi}.
\end{proof}

Using Theorem~\ref{thm:general}, this immediately gives us the
following upper bound, which is now for the weaker class of
container, axis-parallel boxes.

\begin{corollary}
  Let $\disks$ be a family of~$n$ unit hyperdisks in~$\Reals^{d+1}$.
  Then there is an axis-parallel box of volume~$O(n^{\frac{d-1}{d}})$
  into which~$\disks$ can be packed.
\end{corollary}

\section{Other objects and open problems}
\label{sec:OtherObjects}

Our approximation algorithms can be extended to any arbitrary fixed
shape~$A$ of dimension~$d$, provided that~$A$ can be enclosed by some
hyperdisk~$D$ (that is, it is bounded) and contains another
hyperdisk~$d$ (that is, it has nonempty relative interior). More
precisely, if we are given a finite set of congruent copies of~$A$ in
$d+1$~dimensions, we can approximate the minimum-volume container
into which the set can be packed by translations.

This can be done by just applying our algorithm to the corresponding
set of copies of~$D$. Since this gives a constant-factor approximation
of the optimal packing of the~$D$'s, it also gives an approximation of
the optimal packing of the~$d$'s. Observe however, that the
approximation factor is multiplied by~$r^{d+1}$, where~$r$ is the ratio
between the radii of~$D$ and~$d$. Since the optimal packing of
the~$A$'s provides some packing of the~$d$'s its container must be at
least as large, from which we obtain an approximation for the~$A$'s.

The approximation factor obtained this way depends on the shape
of~$A$. For standard shapes such as squares ($r=\sqrt{2}$),
equilateral triangles ($r=2$) etc., we can directly compute it from
the approximation factors of the algorithms.

\medskip

We have given a constant-factor approximation for a special case of
objects, unit-radius hyperdisks in~$\Reals^{d+1}$.  It remains an open
problem whether an optimal packing of hyperdisks of different radii
can be efficiently approximated. It is also unclear what happens,
for example, when packing $(d-1)$-dimensional unit balls in~$\Reals^{d+1}$
under translations.
Finally, approximating the packing of arbitrarily oriented boxes or convex
polyhedra seems to be much more difficult.

\section*{Acknowledgments}

The authors thank G\"unter Rote for helpful comments and a
simplification in the proof of Theorem~\ref{thm:metric}.

N. Seiferth was partially supported by a fellowship within
the FITweltweit program and H.~Alt by the Johann-Gottfried-Herder
program, both of the German Academic Exchange Service
(DAAD). N.~Seiferth was also partially supported by the German
Science Foundation within the collaborative DACH project
\emph{Arrangements and Drawings} as DFG Project
MU3501/3-1. J.~Park was supported in part by starlab project
(IITP-2015-0-00199) and NRF-2017M3C4A7066317.
Research funded in part by the Slovenian Research and Innovation 
Agency (P1-0297, J1-2452, N1-0218, N1-0285).
Research funded in part by the European Union (ERC, KARST, project 
number 101071836). Views and opinions expressed are however those 
of the authors only and do not necessarily reflect those of the 
European Union or the European Research Council. Neither the 
European Union nor the granting authority can be held responsible for them.

\bibliography{references}

\end{document}